\newtheorem{thm}{Theorem}[section]
\newtheorem{cor}[thm]{Corollary}
\newtheorem{prop}[thm]{Proposition}
\theoremstyle{definition}
\newtheorem{exmp}{Example}[section]
\newcommand{\bit}{\begin{itemize}}
\newcommand{\eit}{\end{itemize}}
\newcommand{\benum}{\begin{enumerate}}
\newcommand{\eenum}{\end{enumerate}}
\newcommand{\barr}{\begin{array}}
\newcommand{\earr}{\end{array}}
\newcommand{\bne}{\begin{equation}}
\newcommand{\ene}{\end{equation}}
\newcommand{\bea}{\begin{eqnarray}}
\newcommand{\eea}{\end{eqnarray}}
\newcommand{\bean}{\begin{eqnarray*}}
\newcommand{\eean}{\end{eqnarray*}}
\begin{document}

\title{Sustainability in the Stochastic Ramsey Model}
\author{Rabi Bhattacharya\\ University of Arizona, Tucson, USA
  \\
  Hyeonju Kim\\ University of Arizona, Tucson, USA   
  \\
Mukul Majumdar\\ Cornell University, Ithaca, USA}

\maketitle

\begin{abstract}
\noindent In this paper we provide a self-contained exposition of the problem of sustaining a constant consumption level in a Ramsey model.  Our focus is on the case in which the output capital-ratio is random.  After a brief review of the known results on the probabilities of sustaining a target consumption from an initial stock, we present some new results on estimating the probabilities by using Chebyshev inequalities.  Some numerical calculations for these estimates are also provided.
\end{abstract}

\section{Introduction}

The discrete time one-good model with a linear production function (``Ramsey Model" in
Dorfman-Samuelson-Solow \cite[Chapter 11.2]{dorf} or McFadden \cite[Section 6]{mcfad})
has long been a convenient framework for exploring many themes in
intertemporal economics.  In this paper, the model is used to throw light on
issues related to \textit{sustainable consumption}.  First, let us pose the
sustainability problem in the deterministic case.  The economy starts with a
positive initial stock $x$ of a good (any reproducible resource or asset: the metaphorical ``corn" of growth theory).  From this a positive quantity $c$
is subtracted.  The parameter $c$ is a datum: it is a target consumption
level that the economy wishes to \textit{sustain}.  If the remainder $i=x-c$
is zero or negative, the economy is ``ruined."  If the remainder is strictly
positive, it is interpreted as an \textit{input} into some productive
activity (i.e., an ``investment").  The \textit{output} of this activity (or, the \textit{returns} from the investment) is then the stock at the beginning
of the next period and is given by $X_{1}=r\cdot i=r\cdot (x-c)$, where $r>0$
is \textit{also} a parameter (``output-capital ratio" in the literature on
planning, or an index of ``productivity" of investment).  Again, in period
one, the parameter $c$ is subtracted from $X_{1},$ and the story is
repeated.  Let $N$ be the first period, if any, such that $X_{N}<0$.  If $N$ is finite, we say that the economy can \textit{sustain} $c$ up to
(but not including) the period $N$ (or, that the economy \textit{survives} up to period $N$).  If $N$ is infinite (i.e., $X_{n}\geq 0$ for all $n$), we say that the consumption target $c$ is \textit{sustainable} (or, the economy \textit{survives} forever).  There are other interpretations of the model.  For example, at a microeconomic level, an economic agent or unit (an investor, a gambler, a firm engaged in managing a fishery,...) is ruined (goes bankrupt, loses the privilege of participating in a game of chance, faces a problem of extinction of the resource managed,...) if its wealth (or, cash reserve, or the stock of the renewable resource,...) $X_{n}$ in
any period falls below some prescribed level $c$ (a minimal rate of
dividend, the fee to participate in the game of chance, the target level of
harvesting,...).  The objective is to study conditions on the parameters $r$, $x$ and $c$ that determine sustainability.  It is not difficult to see that if $r\leq 1,$ then no initial $x$ can sustain any $c>0.$  If $r>1,$ the
economy can sustain $c>0$ if and only if $x\geq  [r/(r-1)]c.$\\

\noindent To extend the scope of our analysis, suppose that the returns from the
investment are uncertain rather than deterministic.  We model this by
introducing an i.i.d. sequence $\epsilon _n$ of positive random
variables.  An investment $i_{n}$ generates output $X_{n+1}$ according to
the rule $X_{n+1}=(\epsilon_{n+1}$)$\cdot i_{n}$.  As in the
deterministic case, the economy starts with an initial stock $x,$ and has a
target consumption $c$.  It is ruined if $(x-c)\leq 0$.  If $x-c>0,$ then the
stock in period one is $X_{1}=\epsilon_{1}(x-c)$.  Again, if $X_{1}-c\leq 0$, it is ruined.  Otherwise, after consumption, $X_{1}-c$ is
invested to generate $X_{2}=\epsilon _{2}\cdot(X_{1}-c$).  In
general, one studies the process;
\begin{equation}
X_{0}=x,\, X_{n+1}=(\epsilon_{n+1})(X_{n}-c)_{+},\text{ where }
a_{+}=max(a,0).  \label{1.1}
\end{equation}

\noindent If $x>c,$ the probability of sustaining $c$ is defined as
\begin{equation}
\rho (x)=P(X_{n}>c\text{ for all }n\geq 0|X_{0}=x).  \label{1.2}
\end{equation}

\noindent It is shown that 
\begin{equation}
\rho (x)=P\left\{ \sum_{n=1}^{\infty}(\epsilon _{1}\epsilon
_{2}..\epsilon_{n})^{-1}<(x/c)-1.\right\}  \label{1.3}
\end{equation}

\noindent This formula (\ref{1.3}) can be used to identify conditions on the common
distribution of $\epsilon_{n}$ under which the value of $\rho (x)$ can
be specified (see Propositions \ref{prop0} and \ref{prop2}).  For example, if $Elog\epsilon_{1}\leq 0$, then $\rho (x)=0$ for all $x$ and $c$.   The case $
Elog\epsilon_{1}>0$ is perhaps the most interesting and turns out to be
challenging.  Note if we define the random variable $Z$ as:
\begin{equation}
Z=\sum_{n=1}^{\infty }(\epsilon_{1}\epsilon_{2}\cdots\epsilon_{n})^{-1}, \label{1.4}
\end{equation}

\noindent we realize that the distribution of $Z$ is crucial in determining $\rho (x)$.  To this effect, we derive a recursive relation that facilitates computing
the moments of $Z$ (Proposition \ref{prop2} and its corollaries).  Next, one obtains
estimates of survival and ruin probabilities by using Chebyshev's
inequalities (section \ref{sec:rho}).  We also address the question of estimating the
probabilities of sustaining a consumption target up to a finite $N$ (see
section \ref{sec:finite}).  Numerical calculations for these estimates are provided in
sections \ref{sec:num} and \ref{sec:finite}.\\

\noindent Our exposition draws upon Majumdar and Radner \cite{majum1992} and Bhattacharya and
Waymire \cite{rabi-way}.  There is a substantial literature using continuous time
models that deals with closely related issues.  Majumdar and Radner \cite{majum1991} derived the survival probability of an agent in a diffusion model and also
extended the analysis to the case in which the agent can sequentially choose
from a set of available technologies.  Radner \cite{radner1998} provided a review of
subsequent research on survival of firms.\\

\noindent Turning to models of mathematical biology, the problem of ruin (extinction)
has been investigated in a variety of contexts (see Brauer and
Castillo-Chavez \cite[Chapters 1, 2]{brau}).  A particularly celebrated example of
``constant yield harvesting" from a population the growth of which is
governed by the logistic law leads to the differential equation;
\begin{equation}
dx/dt=\theta x\left( 1-x/K\right) -c,  \label{1.5}
\end{equation}

\noindent where $\theta >0$ is the ``intrinsic" growth rate, and $K$ is the ``carrying
capacity of the environment" (or, the maximum population size that can be
sustained by the environment.  A complete treatment of the extinction and
sustainability is available (Brauer and Castillo-Chavez \cite[pp.28-29]{brau}).

\section{Remarks on the Deterministic Case}

We make a few remarks on the deterministic case and state the main results.  Here, starting with an initial $x>0,$ the economy is ruined if $x-c\leq 0.$  If $x>c,$ the investment $i_{0}=x-c$ generates the stock $X_{1}=r\cdot i_{0}=r(x-c)$ at the beginning of period 1.  The economy is ruined in
period 1 if $X_{1}-c\leq 0.$  If $X_{1}>c,$ the investment $i_{1}$ =$X_{1}-c$
generates $X_{2}=r\cdot i_{1}=r(X_{1}-c)$ and so on.  If the economy can
sustain $c$ up to (but not including) period $2,$ we know that 
\begin{eqnarray}
c+i_{0} &=&x,  \nonumber \\
c+i_{1} &=&r\cdot i_{0}. \label{2.1}
\end{eqnarray}

\noindent It follows that $c(1+1/r)+i_{1}/r=x,$ leading to:
\bean
c(1+1/r)<x. 
\eean

\noindent Hence, if the economy can sustain $c$ up to period $N$, we must have
\begin{equation}
\sum_{n=0}^{N-1}(1/r^{n})<x/c.  \label{2.2}
\end{equation}

\noindent We immediately conclude that if $r\leq 1,$ then for any $c>0$, there is no $x>0$ such that $c$ can be sustained forever.  Indeed, it is also
easy to verify the following:

\begin{prop}Let $r>1$.  The economy can sustain $c>0$ if and only if 
\begin{equation}
r/(r-1)\leq x/c.  \label{2.3}
\end{equation}
\end{prop}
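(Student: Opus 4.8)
The plan is to prove both directions separately, using the recursion $X_{n+1} = r(X_n - c)$ and the geometric-sum inequality \eqref{2.2} already derived.

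**Necessity.** Suppose $c$ can be sustained forever, so $X_n > c$ for all $n \ge 0$. Then \eqref{2.2} holds for every $N$: $\sum_{n=0}^{N-1} r^{-n} < x/c$. Letting $N \to \infty$ and using $r > 1$ so the geometric series converges, we obtain $\sum_{n=0}^{\infty} r^{-n} = \frac{1}{1 - 1/r} = \frac{r}{r-1} \le x/c$. (The limit gives $\le$ rather than $<$ because a strict inequality is not preserved under limits.) This is exactly \eqref{2.3}.

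**Sufficiency.** Suppose $r/(r-1) \le x/c$. I would show by induction that $X_n \ge \frac{r}{r-1}\, c > c$ for all $n$, which certainly implies sustainability. The base case is $X_0 = x \ge \frac{r}{r-1} c$ by hypothesis. For the inductive step, suppose $X_n \ge \frac{r}{r-1} c$. Then
\[
X_{n+1} = r(X_n - c) \ge r\left(\frac{r}{r-1}c - c\right) = r \cdot \frac{c}{r-1} = \frac{r}{r-1}\, c,
\]
completing the induction. Since $\frac{r}{r-1} > 1$ (again because $r > 1$), we have $X_n > c$ for all $n$, so the economy is never ruined.

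**The main subtlety** — really the only one — is the boundary behavior: checking that the weak inequality in \eqref{2.3} is the right one. In the necessity direction, the limit of strict inequalities \eqref{2.2} is only weak; in the sufficiency direction, the level $\frac{r}{r-1} c$ is a fixed point of the map $y \mapsto r(y-c)$, so starting exactly there keeps $X_n$ constant at a value strictly above $c$, which is why equality in \eqref{2.3} still permits sustainability. Everything else is a routine geometric-series computation and a one-line induction.
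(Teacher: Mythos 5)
Your proof is correct and is essentially the argument the paper has in mind (the paper omits the proof, saying only that it is ``easy to verify''): necessity follows by letting $N\to\infty$ in the already-derived inequality $\sum_{n=0}^{N-1}r^{-n}<x/c$, and sufficiency follows from your induction, the key observation being that $\tfrac{r}{r-1}c$ is a fixed point of $y\mapsto r(y-c)$ lying strictly above $c$. Your remark on why the boundary case yields a weak inequality on the necessity side yet still permits survival on the sufficiency side is exactly the right point to flag.
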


\section{The Stochastic Ramsey Model}

\subsection{Infinite-Horizon Survival Probability and Conditions on the Common Distribution of $\mathbf{\epsilon_n}$}  

To avoid undue repetition, the economy starts with an initial stock $x>0$ and plans to sustain a consumption level $c>0$.  If $x-c\leq 0,$ it is ruined
``immediately."  So focus on the case $x>c.$  The investment $i_{0}=x-c$ gives
rise to the stock in period 1, $X_{1}=\epsilon_{1}(x-c)$, where $
\epsilon_{1}$ is a nonnegative random variable.  If $X_{1}\leq c,$ the
economy is ruined in period 1; otherwise, if $X_{1}>c,$ the investment
generates the stock in the next period, and so on.  In general, we study:
%



%
%
\bea \label{model1}X_0=x,\quad X_{n+1}=\epsilon_{n+1}(X_n-c)_+\quad
(n\ge 0),\quad a_+=\max(a,0), \eea

\noindent where $\{\epsilon_n:\,n\ge 1\}$ is an i.i.d. sequence of nonnegative
random variables.  The state space may be taken to be $[0,\infty)$
with \emph{absorption} at $0$.  The \emph{probability of survival} of
the economic agent with an initial stock $x>c$ is
\bea\label{def-rho} \rho(x):=P(X_n>c\,\,\text{for all}\,\,n\ge 0|X_0=x). \eea

\noindent Suppose $P(\epsilon_1>0)=1$.  For otherwise, it is simple to check
that that eventual ruin is certain, i.e. $\rho(x)=0$.  From (\ref{model1}), successive iteration yields
\bean
X_{n+1}>c
\,\,&\text{iff}\,\,X_n>c+\frac{c}{\epsilon_{n+1}}\,\,\text{iff}\,\,X_{n-1}>c+\frac{c+c/\epsilon_{n+1}}{\epsilon_n}\cdots\\
&\text{iff}\,\, X_0\equiv x>c+\frac{c}{\epsilon_1}+\frac{c}{\epsilon_1\epsilon_2}+\cdots+\frac{c}{\epsilon_1\epsilon_2\cdots\epsilon_{n+1}}.
\eean
Hence, on the set $\{\epsilon_n>0$ for all $n\}$,
\bean
\{X_n>c\,\,\text{for
  all}\,n\}&=&\big\{x>c+c\sum_{j=1}^{n}\frac{1}{\epsilon_1\epsilon_2\cdots\epsilon_j}\,\,\text{for all}\,\,n
\big\}\\
&=&\big\{x>
c+c\sum_{n=1}^{\infty}\frac{1}{\epsilon_1\epsilon_2\cdots\epsilon_n}
\big\}=\big\{\sum_{n=1}^{\infty}\frac{1}{\epsilon_1\epsilon_2\cdots\epsilon_n}<
\frac{x}{c}-1\big\}
\eean
In other words, 
\bea \label{rho}\rho(x)=P\big\{\sum_{n=1}^{\infty}\frac{1}{\epsilon_1\epsilon_2\cdots\epsilon_n}<
\frac{x}{c}-1\big\}. \eea

\noindent For completeness, we review conditions on the common
distribution of $\epsilon_n$ under which one has (a) $\rho(x)=0$,  (b)$\rho(x)=1$, or (c)
$\rho(x)<1\,\,(x>c)$.  The Strong
Law of Large Numbers gives

\bean \frac{1}{n}\sum_{r=1}^n\log\epsilon_r
\stackrel{a.s.}{\longrightarrow}E\log\epsilon_1.\eean

\noindent Thus, if $E\log \epsilon_1<0$, $\epsilon_1\epsilon_2\cdots\epsilon_n\rightarrow 0$
a.s.  This implies that the infinite series in (\ref{rho}) diverges
a.s., i.e.,
\bea \label{rho.c} \rho(x)=0\quad \text{for all}\,\,x \,\,\text{if}\,\,
E\log\epsilon_1< 0. \eea

\noindent If $\epsilon_1$ is non-degenerate and $E\log\epsilon_1=0$, then
$\sum_{r=1}^n\log\epsilon_r$ also has a subsequence converging to
$-\infty$ a.s., and again the series in (\ref{rho}) diverges and
$\rho(x)=0$, $\forall x>c$.  By  Jensen's Inequality, $E\log\epsilon_1\le \log E\epsilon_1$, with
strict inequality if $\epsilon_1$ is nondegenerate, which we assume.  Therefore, $E\epsilon_1\le1$ implies 
$E\log\epsilon_1<0$, so that $\rho(x)=0$.  Next, let us consider the
case $E\log\epsilon_1>0$.  Define $m:=\inf\{ z\ge
0:\,P(\epsilon_1\le z)> 0\}$.  We will show that

\bea \label{rho.c1} 
\rho(x)<1\quad \text{for all}\,\,x,\,\text{if}\,\,m\le 1. \eea

\noindent Fix $A>0$, however large.  One can find $n_0$ such that
$n_0>A\prod_{r=1}^{\infty}(1+r^{-2})$ as $\prod(1+r^{-2})<\exp\{\sum r^{-2}\}<\infty$.  If $m\le 1$ then
$P(\epsilon_1\le 1+r^{-2})>0$ for all $r\ge 1$.  Consequently,

\bean
0&<&P(\epsilon_r\le 1+r^{-2}\,\,\text{for}\,\,1\le r\le n_0)\\
&\le&P\big(\sum_{r=1}^{n_0}\frac{1}{\epsilon_1\epsilon_2\cdots\epsilon_r}\ge\sum_{r=1}^{n_0}\frac{1}{\prod_{j=1}^{r}(1+1/j^2)}\big)\\
&\le&P\big(\sum_{r=1}^{n_0}\frac{1}{\epsilon_1\epsilon_2\cdots\epsilon_r}\ge
\frac{n_0}{\prod_{j=1}^{\infty}(1+1/j^2)}\big)\\
&\le&P(\sum_{r=1}^{n_0}(\epsilon_1\epsilon_2\cdots\epsilon_n)^{-1}>A)\le P (\sum_{r=1}^{\infty}(\epsilon_1\epsilon_2\cdots\epsilon_r)^{-1}>A).
\eean

\noindent Since $A$ is arbitrary, (\ref{rho}) is less than 1 for all $x$, proving (\ref{rho.c1}).  \\
One may also prove that, for $m>1$,
\begin{equation}\label{rho1}
    \rho(x) \left\{
    \begin{array}{rl}
      <1 & \text{if } x < c(\frac{m}{m-1}),\\ 
      =1& \text{if } x \ge c(\frac{m}{m-1}).\qquad (m>1) .
    \end{array}\right.
\end{equation}
Observe that
$\sum_{n=1}^{\infty}(\epsilon_1\epsilon_2\cdots\epsilon_n)^{-1}\le
\sum_{n=1}^{\infty}m^{-n}=1/(m-1)$, with probability 1 if $m>1$.
 The second relation in (\ref{rho1}) is subsequently drawn by
 (\ref{rho}).  For the first relation in (\ref{rho1}) to be shown, letting
$x<cm/(m-1)-c\delta$ for some $\delta>0$ implies $x/c-1<1/(m-1)-\delta$.  One can choose $n(\delta)$ such that
$\sum_{r=n(\delta)}^{\infty}m^{-r}<\delta/2$ and then choose
$\delta_r>0$ $(1\le r\le n(\delta)-1)$ such that
\bean
\sum_{r=1}^{n(\delta)-1}\frac{1}{(m+\delta_1)\cdots(m+\delta_r)}>\sum_{r=1}^{n(\delta)-1}\frac{1}{m^r}-\frac{\delta}{2}.
\eean
Then
\bean
0&<&P(\epsilon_r<m+\delta_r\text{ for }1\le r\le n(\delta)-1)\le
P\big(
\sum_{r=1}^{n(\delta)-1}\frac{1}{\epsilon_1\cdots\epsilon_r}>\sum_{r=1}^{n(\delta)-1}\frac{1}{m^r}-\frac{\delta}{2}\big)\\
&\le &P\big(
\sum_{r=1}^{\infty}\frac{1}{\epsilon_1\cdots\epsilon_r}>\sum_{r=1}^{\infty}\frac{1}{m^r}-\delta\big)=P\big(
\sum_{r=1}^{\infty}\frac{1}{\epsilon_1\cdots\epsilon_r}>\frac{1}{m-1}-\delta\big).
\eean
For $\delta>0$ small enough, the last probability is smaller than
$P(\sum (\epsilon_1\cdots\epsilon_r)^{-1}>x/c-1)$ if $x/c-1<1/(m-1)$,
i.e., if $x<cm/(m-1)$.  For such $x$, $1-\rho(x)>0$.  The desired
result is obtained.  The following proposition summarizes the above results:\\

\begin{prop}\label{prop0}(\cite{rabi-way}, \cite{majum1992}) Let $m:=\inf\{ z\ge
0:\,P(\epsilon_1\le z)> 0\}$.
\bit
\item[(a)] If $E\log\epsilon_1 \le 0$, then $\rho(x)=0$ for all $x$ and $c$.\\
\item[(b)] If $E\log\epsilon_1>0$, then 
\begin{equation*}
    \rho(x) =
    \begin{cases}
     <1& \text{if } m\le 1 \,\,\forall x, \text{ or } x<c\frac{m}{m-1}\,\, (m>1),\\
      =1 & \text{if } x\ge c\frac{m}{m-1}\,\,(m>1).
    \end{cases}
\end{equation*}

\eit
\end{prop}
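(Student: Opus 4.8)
\noindent The statement collects the facts developed in the discussion above, so the proof is essentially a matter of organising those arguments around the random variable $Z$ of (\ref{1.4}), for which (\ref{rho}) reads $\rho(x)=P(Z<x/c-1)$. Hence everything reduces to locating the essential range of $Z$: part (a) amounts to $P(Z=\infty)=1$, and part (b) to identifying the essential supremum of $Z$ and comparing it with $x/c-1$. Throughout it helps to keep in mind that the partial sums $W_n:=\sum_{r=1}^n(\epsilon_1\cdots\epsilon_r)^{-1}$ increase to $Z$, since all the estimates are really statements about $W_n$ on explicitly chosen positive-probability events, together with $Z\ge W_n$.

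\noindent For part (a) the target is $P(Z=\infty)=1$, which at once forces $\rho(x)=0$ for every $x$ and $c$. Write $L_n:=\sum_{r=1}^n\log\epsilon_r$, so $\epsilon_1\cdots\epsilon_n=e^{L_n}$. If $E\log\epsilon_1<0$, the Strong Law gives $L_n/n\to E\log\epsilon_1<0$, hence $e^{-L_n}\to\infty$ and the series for $Z$ diverges a.s. If $E\log\epsilon_1=0$ with $\epsilon_1$ non-degenerate, $(L_n)$ is a genuine mean-zero random walk on $\mathbb{R}$, and by the oscillation property of such walks ($\liminf_n L_n=-\infty$ a.s.) one again gets $e^{-L_{n_k}}\to\infty$ along a subsequence, so $Z=\infty$ a.s. If $\epsilon_1$ is degenerate with $E\log\epsilon_1=0$ then $\epsilon_1\equiv1$ and $Z=\infty$ trivially. (As an aside, Jensen's inequality gives $E\epsilon_1\le1\Rightarrow E\log\epsilon_1<0$, so (a) already covers the natural hypothesis $E\epsilon_1\le1$.)

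\noindent For part (b) assume $E\log\epsilon_1>0$ and split on $m:=\inf\{z\ge0:P(\epsilon_1\le z)>0\}$, the essential infimum of $\epsilon_1$. If $m>1$ then $\epsilon_r\ge m$ a.s. for all $r$, so $W_n\le\sum_{r\ge1}m^{-r}=1/(m-1)$, whence $Z\le1/(m-1)$ a.s.; thus $x\ge cm/(m-1)$, i.e. $x/c-1\ge1/(m-1)$, yields $\rho(x)=1$, where at the threshold $x=cm/(m-1)$ one uses $P(Z=1/(m-1))=P(\epsilon_r=m\,\,\text{for all}\,r)=0$, valid since $\epsilon_1$ is non-degenerate. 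For $x<cm/(m-1)$, and throughout the case $m\le1$, the plan is to produce $P(Z>x/c-1)>0$ via the explicit perturbation arguments already displayed: when $m>1$, choose $\delta>0$ with $x/c-1<1/(m-1)-\delta$, drop the geometric tail past some $n(\delta)$, pick $\delta_r>0$ so that the partial sums of $\prod_{j\le r}(m+\delta_j)^{-1}$ approximate those of $m^{-r}$ to within $\delta/2$, and invoke $P(\epsilon_r<m+\delta_r,\ 1\le r\le n(\delta)-1)>0$ (valid because $m$ is the essential infimum); when $m\le1$, use instead the bounds $\epsilon_r\le1+r^{-2}$, available for all $r$ precisely because $m\le1$, to obtain for each prescribed $A$ a positive-probability event on which $W_{n_0}\ge n_0/\prod_{j\ge1}(1+j^{-2})>A$ once $n_0>A\prod_{j\ge1}(1+j^{-2})$, so the essential range of $Z$ is unbounded and $\rho(x)<1$ for every $x$.

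\noindent The only steps requiring more than bookkeeping are the oscillation theorem for non-degenerate mean-zero random walks, used in the $E\log\epsilon_1=0$ sub-case of (a), and the two perturbation constructions that manufacture positive-probability events on which $W_n$ is large, which carry the strict-inequality parts of (b); the boundary value $x=cm/(m-1)$ needs only the observation that a non-degenerate $\epsilon_1$ puts no mass on the single path $\epsilon_r\equiv m$. Everything else is the monotone passage $W_n\uparrow Z$ together with the estimates already displayed, reorganised into the cases of the statement.
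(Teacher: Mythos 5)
Your proof follows essentially the same route as the paper's own derivation: the Strong Law plus the oscillation property of a non-degenerate mean-zero random walk for part (a), and for part (b) the geometric bound $Z\le 1/(m-1)$ when $m>1$ together with the two positive-probability perturbation constructions (the $\epsilon_r\le 1+r^{-2}$ event for $m\le 1$ and the $\epsilon_r<m+\delta_r$ event for $x<cm/(m-1)$). The one place you go beyond the paper is the boundary case $x=cm/(m-1)$, where you correctly note that $P(Z=1/(m-1))=0$ for non-degenerate $\epsilon_1$ is needed to pass from the strict inequality in $\rho(x)=P(Z<x/c-1)$ to $\rho(x)=1$; the paper treats this point silently.
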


\noindent The subsequent Proposition~\ref{prop1} provides more explicit
statements on $\rho(x)$.  This allows $\rho(x)$ to be constructed by
estimating $Z$ with the common distribution
of $\epsilon_n$. \\

\begin{prop}\label{prop1} (\cite{rabi-way}, \cite{majum1992})
Assume $E\log\epsilon_1>0$, $Z:= \sum_{1\le
   n<\infty}(\epsilon_1\epsilon_2\cdots\epsilon_n)^{-1}$.  
Define $d_1= \inf\{z\ge 0: P(Z\le z) >0\},$ $d_2 =
\sup\{z\ge 0: P(Z\ge z)>0\}$, and $M= \sup\{z\ge
0: P(\epsilon_1\ge z)>0\}$.  Then,
\bit
\item[(a)] $Z$ is finite (almost surely).
\item[(b)]     
\begin{equation*}\rho(x) = \left\{
    \barr {rl}
      0 & \text{if  } x< c(d_1+1),\\
      \in (0,1)& \text{if  } c(d_1+1)<x< c(d_2+1),\\
      1 & \text{if  } x>c(d_2+1).\\
    \earr \right.
\end{equation*}        
\item[(c)] $\rho(x) =0$ if  $x < \frac{cM}{M-1}$ ($1<M<\infty$), or for all
  $x$ ($M\le 1$).
\item[(d)] One can express the (essential) lower bound $d_1$ and upper bound $d_2$ of $Z$ in terms of those of $\epsilon_1$, namely, $m$ and $M$:\bit
\item [(i)] $d_1= \sum_{1\le n<\infty} M^{-n} = 1/(M-1)$ if $M > 1$, and
$d_1 = \infty$ if $M \le 1$.  
\item[(ii)]  $d_2= \sum_{1\le n<\infty} m^{-n} = 1/( m-1)$ if $m > 1$, and
$d_2 = \infty$ if $m \le 1$, where $m$ is defined in (\ref{rho.c1}).   
\eit
\eit
\end{prop}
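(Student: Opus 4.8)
The plan is to handle the four parts in order, using the identity $\rho(x)=P(Z<x/c-1)$ from (\ref{rho}), and to concentrate the delicate work in part (d). For (a) I would invoke the Strong Law of Large Numbers: with $\mu:=E\log\epsilon_1>0$ one has $n^{-1}\sum_{r=1}^{n}\log\epsilon_r\to\mu$ a.s., so for a.e.\ $\omega$ there is $N(\omega)$ with $\epsilon_1\cdots\epsilon_n\ge e^{n\mu/2}$ for $n\ge N(\omega)$; the tail of the series defining $Z$ is then dominated by the convergent series $\sum_n e^{-n\mu/2}$, so $Z<\infty$ a.s., and $d_1,d_2$ are the essential infimum and essential supremum of a finite random variable.

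Part (b) is then a direct reading of (\ref{rho}). If $x<c(d_1+1)$ then $x/c-1<d_1$, so $P(Z\le x/c-1)=0$ and $\rho(x)=0$. If $c(d_1+1)<x<c(d_2+1)$ then $d_1<x/c-1<d_2$: choosing $z<x/c-1$ in the set defining $d_1$ gives $\rho(x)\ge P(Z\le z)>0$, and choosing $z>x/c-1$ in the set defining $d_2$ gives $P(Z\ge x/c-1)\ge P(Z\ge z)>0$, hence $\rho(x)<1$. If $x>c(d_2+1)$ then $x/c-1>d_2$, so $P(Z\ge x/c-1)=0$ and $\rho(x)=1$.

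For (c), if $1<M<\infty$ then $\epsilon_r\le M$ a.s., so $\epsilon_1\cdots\epsilon_n\le M^n$ and $Z\ge\sum_n M^{-n}=1/(M-1)$ a.s.; since $x<cM/(M-1)$ forces $x/c-1<1/(M-1)$, this gives $\rho(x)=0$, and if $M\le1$ then $\epsilon_r\le1$ a.s., every term of the series is $\ge1$, so $Z=\infty$ a.s.\ and $\rho\equiv0$. The ``easy'' inequalities in (d) are of the same flavour: $\epsilon_r\le M$ a.s.\ gives $Z\ge 1/(M-1)$ a.s., so $d_1\ge 1/(M-1)$; and, when $m>1$, $\epsilon_r\ge m$ a.s.\ gives $Z\le 1/(m-1)$ a.s., so $d_2\le 1/(m-1)$.

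The hard part of (d) is sharpness. To get $d_1\le 1/(M-1)$ (for $1<M<\infty$; the cases $M=\infty$ and $M\le1$ are a limiting and a vacuous variant) I would fix $\eta>0$, choose $\delta>0$ with $M-\delta>1$ and $1/(M-\delta-1)<1/(M-1)+\eta/2$, and write, for $n_0$ to be chosen,
\[ Z=\sum_{r=1}^{n_0}(\epsilon_1\cdots\epsilon_r)^{-1}+(\epsilon_1\cdots\epsilon_{n_0})^{-1}Z', \]
where $Z':=\sum_{k\ge1}(\epsilon_{n_0+1}\cdots\epsilon_{n_0+k})^{-1}$ has the same law as $Z$ and is independent of $(\epsilon_1,\dots,\epsilon_{n_0})$. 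Using (a), pick $K$ with $P(Z'<K)>0$, then $n_0$ with $(M-\delta)^{-n_0}K<\eta/2$. On $\{\epsilon_r>M-\delta,\ 1\le r\le n_0\}\cap\{Z'<K\}$ --- a positive-probability event, since $P(\epsilon_1>M-\delta)>0$ by the definition of $M$ and the two events are independent --- one has $Z<1/(M-\delta-1)+(M-\delta)^{-n_0}K<1/(M-1)+\eta$. As $\eta$ is arbitrary, $d_1\le 1/(M-1)$. The mirror argument, now keeping only the first $n_0$ terms on the positive-probability event $\{\epsilon_r<m+\delta,\ 1\le r\le n_0\}$, yields $d_2\ge\sum_{r=1}^{n_0}(m+\delta)^{-r}$, hence $d_2\ge 1/(m-1)$ when $m>1$, while $d_2=\infty$ when $m\le1$ (take $\delta$ with $m+\delta\le1$, or replace $m+\delta$ by $1+r^{-2}$ exactly as in the derivation of (\ref{rho.c1}) when $m=1$). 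I expect the tail term $(\epsilon_1\cdots\epsilon_{n_0})^{-1}Z'$ to be the crux: making it negligible is precisely where finiteness of $Z$ from part (a) is used, and it is the only place the argument needs more than crude monotonicity of the series in the $\epsilon_r$.
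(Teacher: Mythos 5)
Your proof is correct, and parts (a), (b), (c) and the ``easy'' halves of (d) follow the paper's own argument essentially verbatim (SLLN for finiteness, direct reading of $\rho(x)=P(Z<x/c-1)$ against the essential bounds $d_1,d_2$, and crude monotone domination $m\le\epsilon_r\le M$). Where you genuinely diverge is the sharpness step of (d), and your version is the stronger one. The paper proves $d_1\le 1/(M-1)$ by noting $P\bigl(\sum_{n\le N}(\epsilon_1\cdots\epsilon_n)^{-1}<\sum_{n\le N}(M-\theta)^{-n}\bigr)>0$ for every $N$ and then ``letting $N\to\infty$'' to conclude $P(Z\le 1/(M-\theta-1))>0$; as written this is a non sequitur, since the positive-probability events depend on $N$, their intersection $\{\epsilon_n>M-\theta\ \forall n\}$ is typically null, and the tail $\sum_{n>N}$ is never controlled. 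Your decomposition $Z=\sum_{r\le n_0}(\epsilon_1\cdots\epsilon_r)^{-1}+(\epsilon_1\cdots\epsilon_{n_0})^{-1}Z'$, with $Z'\stackrel{d}{=}Z$ independent of $(\epsilon_1,\dots,\epsilon_{n_0})$ and $P(Z'<K)>0$ from part (a), is exactly the missing ingredient: it makes the tail $<(M-\delta)^{-n_0}K$ on a positive-probability event and closes the gap. (It is also the same self-similar identity the paper only introduces later, in Proposition~\ref{prop2}.) You correctly observe that no such care is needed for $d_2\ge 1/(m-1)$, since there one only needs a lower bound on $Z$ and the tail can simply be discarded; your handling of the boundary cases $M=\infty$, $M\le 1$, $m\le 1$ matches the paper's. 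In short: same skeleton, but at the one delicate point your argument is a repair of, not a paraphrase of, the paper's.
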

\begin{proof} \emph{(a)}  By the Strong Law of Large Numbers, $ (\sum_{1\le j\le n} \log
  \epsilon_j)/n \rightarrow \mu$ (with probability 1), where $\mu = E \log \epsilon_1 (>0)$.  Therefore,
  there exists a random variable $N$ which is finite a.s. such that
  $(\sum_{1\le j\le n} \log \epsilon_j)/n > \mu/2$ for all $n > N$.
  In other words,
  $(\epsilon_1\epsilon_2\cdots\epsilon_n)^{-1} <  e^{ -n\mu/2}$ for $n > N$.
This suggests that 
\bea 
Z &=&\sum_{1\le n\le N}
(\epsilon_1\epsilon_2\cdots\epsilon_n)^{-1}+\sum_{n>N}(\epsilon_1\epsilon_2\cdots\epsilon_n)^{-1} \\
&<&\sum_{1\le n\le N} (\epsilon_1\epsilon_2\cdots\epsilon_n)^{-1} +  \sum_{n>N}
e^{-nμ/2} <\infty\,\,a.s.
\eea

\emph{(b)}  $x < c(d_1+1)$ implies $x/c-1 < d_1$.  One can find
$\theta>0$ such that $x/c-1 < d_1-\theta$, which implies 
 \bean \rho(x) =  P(Z \le x/c-1) \le  P( Z < d_1-\theta) =0.\eean
Likewise, $x > c(d_2+1)$ indicates $x/c -1 > d_2$.  Again, one can find
$\theta>0$ such that $x/c-1 > d_2+\theta$, which indicates 
\bean 1= P(Z < d_2+\theta) \le  P(Z \le x/c-1) = \rho(x).\eean
Finally, $c(d_1+1) < x<c(d_2+1)$ suggests $d_1 < x/c-1 < d_2$.  Then
$x/c-1 > d_1+ \theta$ for some $\theta>0$, which suggests $\rho(x) =
P(Z \le  x/c-1)\ge P(Z < d_1+\theta) >0$.  Similarly, $x/c-1 < d_2
-\theta'$ for some $\theta'>0$, which turns out to be $\rho(x) =  P(Z
\le x/c-1) \le P(Z< d_2 -\theta') <1$. \\

\emph{(c)} $x < cM/(M-1)$ can be rewritten in the form of $x/c -1 < 1/(M-1)$.
Notice that $P( \epsilon_1 > M) =0$.  This is because if $P( \epsilon_1 > M) >0$, then
$P(\epsilon_1 \le M) < 1$.  Thus, there exists $\theta>0$ such that
$P(\epsilon_1\ge  M+\theta) >0$, contradiction.  $Z= \sum_{1\le n<\infty} (\epsilon_1\epsilon_2\cdots\epsilon_n)^{-1} \ge
\sum_{1\le n<\infty} M^{-n} = 1/(M-1)>x/c -1$, so that  $\rho(x) =
P(Z \le x/c-1) =0$.\\  

\noindent Next, $M\le 1$ leads to $\epsilon_1\epsilon_2\cdots\epsilon_n \le 1$
for all $n$.  Then, $(\epsilon_1\epsilon_2\cdots\epsilon_n)^{-1} \ge
1$ for all $n$, which implies $Z = \infty$ almost surely, and $\rho(x)
=  P(Z \le x/c -1) =0$, no matter how large $x$ may be. \\

\emph{(d)-(i)} For $M \le 1$, $P(\epsilon_n \le M) =1$ for all $n$, yielding $Z = \sum_{1\le
   n<\infty} (\epsilon_1\epsilon_2\cdots\epsilon_n)^{-1} \ge
 \sum_{1\le n<\infty} M^{-n} =\infty$ almost surely.  It follows that $d_1 =\infty$.\\

\noindent For $M >1$, again, $P (\epsilon_n
   \le M) =1$ for all $n$, suggesting $Z = \sum_{1\le n<\infty}
   (\epsilon_1\epsilon_2\cdots\epsilon_n)^{-1} \ge \sum_{1\le
     n<\infty} M^{-n}  = 1/(M-1)$, almost surely.  Therefore, $d_1\ge 1/(M-1)$.   To prove $d_1\le 1/(M-1)$, note that there
   exists $\theta>0$ such that  $M -\theta >1$, and $P(\epsilon_1 >
   M-\theta)>0$ by the definition of $M$.  Since $\epsilon_n$'s are
   independent, $P(\epsilon_n>M-\theta\text{ for all }
   n=1,2,...,N)=\prod_{1\le n\le N}P(\epsilon_n>M-\theta)>0$ for every
   $N$.  This implies $P(\sum_{1\le n\le N} (\epsilon_1\epsilon_2\cdots\epsilon_n)^{-1} <
   \sum_{1\le n\le N} (M-\theta)^{-n} ) > 0$ for every $N$.  Besides,
   $\sum_{1\le n\le N}
   (\epsilon_1\epsilon_2\cdots\epsilon_n)^{-1}\rightarrow Z$, and
   $\sum_{1\le n\le N}(M-\theta)^{-n}$ converges to $1/(M-\theta-1)$ as $N\rightarrow
   \infty$.  It turns out that $P( Z \le 1/(M-\theta-1)) >0$.  Therefore,
   $d_1 \le 1/(M-\theta-1)$ for every $\theta > 0$.  Letting $\theta
   \downarrow 0$ gives rise to $d_1 \le 1/(M-1)$.\\

\emph{(d)-(ii)} For $m > 1$, $P (\epsilon_1 \ge m) =1$, indicating $Z =\sum_{1\le n<\infty}
(\epsilon_1\epsilon_2\cdots\epsilon_n)^{-1} \le \sum_{1\le n<\infty}
m^{-n} =1/(m-1)$ almost surely.  It follows $d_2 \le 1/(m-1)$.  Note
that $P( Z \ge 1/(m-1) +\theta') =0$ for all $\theta' > 0$.  To prove
$d_2 \ge 1/(m-1)$, one obtains $P( \epsilon_1 < m+\theta) > 0$ for any
$\theta>0$ and by the definition of $m$.  Arguing as in \emph{(i)}, one
demonstrates that $P(\sum_{1\le n\le N} (\epsilon_1\epsilon_2\cdots\epsilon_n)^{-1}
>\sum_{1\le n\le N} (m+\theta)^{-n} ) > 0$ for every $N$, and $P(Z \ge
1/(m+\theta-1) ) > 0$ as $N\rightarrow \infty$.  This proves that $d_2
\ge 1/(m+\theta-1)$.  This is true for every $\theta > 0$, so that $d_2 \ge 1/(m-1)$.\\ 

\noindent Now let $m \le 1$.  For every $\theta>0$, $P (\epsilon_1 \le
 1+\theta) >0$, implying $P(Z\ge \sum_{1\le n\le N}(1+\theta)^{-n} ) > 0$.
 Since $Z=\sum_{1\le
   n<\infty}(\epsilon_1\epsilon_2\cdots\epsilon_n)^{-1} ≥\ge
  \sum_{1\le n\le N}
   (1+\theta)^{-n} \rightarrow 1/\theta$ as $N\rightarrow \infty$.  Hence, $P(Z \ge
   1/\theta)  > 0$ for every $\theta > 0$, implying $d_2 = \infty$.
\end{proof}

\subsection{Recursive Computation of the Moments of $Z$} \label{sec:Z}

The following novel method using a recursive relation facilitates computing the moments of $Z$ whose distribution is quite intractable.\\

\begin{prop}\label{prop2} One has the relation
\bea   Z = (1/\epsilon_1)( 1+ W), \eea 
where $W =  \sum_{2\le n\le N}
(\epsilon_2\epsilon_3\cdots\epsilon_n)^{-1}$ has the same distribution
as $Z$, and $W$ and $\epsilon_1$ are independent.  
\end{prop}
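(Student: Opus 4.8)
The plan is to prove the identity $Z = (1/\epsilon_1)(1 + W)$ directly from the definition of $Z$ by factoring the common factor $1/\epsilon_1$ out of every term of the series, and then to identify the distributional properties of $W$ using the i.i.d.\ structure of the sequence $\{\epsilon_n\}$. First I would write out the defining series
\bean
Z = \sum_{n=1}^{\infty} \frac{1}{\epsilon_1 \epsilon_2 \cdots \epsilon_n}
= \frac{1}{\epsilon_1} + \frac{1}{\epsilon_1 \epsilon_2} + \frac{1}{\epsilon_1 \epsilon_2 \epsilon_3} + \cdots,
\eean
and observe that the leading term is $1/\epsilon_1$ while each remaining term has $1/\epsilon_1$ as a factor. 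Pulling $1/\epsilon_1$ out of the whole sum gives
\bean
Z = \frac{1}{\epsilon_1}\Big( 1 + \sum_{n=2}^{\infty} \frac{1}{\epsilon_2 \epsilon_3 \cdots \epsilon_n}\Big) = \frac{1}{\epsilon_1}(1 + W),
\eean
where $W := \sum_{n=2}^{\infty}(\epsilon_2 \epsilon_3 \cdots \epsilon_n)^{-1}$. Since Proposition~\ref{prop1}(a) guarantees $Z < \infty$ almost surely (and the same argument applies verbatim to the shifted sequence $\epsilon_2, \epsilon_3, \dots$), the series defining $W$ converges a.s.\ and the rearrangement is legitimate because all terms are nonnegative.

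Next I would establish that $W$ has the same distribution as $Z$. The map $(\epsilon_2, \epsilon_3, \dots) \mapsto \sum_{n\ge 2}(\epsilon_2 \cdots \epsilon_n)^{-1}$ is exactly the same measurable functional of the sequence $(\epsilon_2, \epsilon_3, \dots)$ that the map $(\epsilon_1, \epsilon_2, \dots) \mapsto \sum_{n\ge 1}(\epsilon_1 \cdots \epsilon_n)^{-1}$ is of $(\epsilon_1, \epsilon_2, \dots)$ — after reindexing $\epsilon_{k+1} \to \epsilon_k$, the defining expression is identical. Because $\{\epsilon_n : n\ge 1\}$ is i.i.d., the shifted sequence $(\epsilon_2, \epsilon_3, \dots)$ has the same joint distribution as $(\epsilon_1, \epsilon_2, \dots)$, so the pushforward under the same functional yields $W \stackrel{d}{=} Z$. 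Finally, $W$ is a function of $(\epsilon_2, \epsilon_3, \dots)$ alone, and $\epsilon_1$ is independent of that entire collection by the i.i.d.\ assumption; hence $W$ and $\epsilon_1$ are independent.

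I do not anticipate a serious obstacle here; the result is essentially a bookkeeping identity combined with a standard shift-invariance argument for i.i.d.\ sequences. The only point requiring a little care is the justification of the term-by-term factorization and rearrangement of the infinite series, which is clean because every summand is nonnegative (so partial sums are monotone and Tonelli-type manipulations are unconditionally valid) and because the a.s.\ finiteness of $Z$ — already in hand from Proposition~\ref{prop1}(a) — rules out the degenerate $\infty = \infty$ situation. One should also note in passing that $\epsilon_1 > 0$ a.s.\ (otherwise $\rho(x) = 0$ and the model is trivial, as already observed), so dividing by $\epsilon_1$ is legitimate.
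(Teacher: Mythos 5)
Your proof is correct and is exactly the factor-out-$1/\epsilon_1$ and shift-invariance argument the paper relies on; in fact the paper states Proposition~\ref{prop2} without any written proof, treating it as immediate, so your write-up simply supplies the omitted details (nonnegativity justifying the term-by-term factorization, $\epsilon_1>0$ a.s.\ justifying the division, and the i.i.d.\ shift giving $W\stackrel{d}{=}Z$ with $W$ independent of $\epsilon_1$). No gaps.
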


\begin{cor}\label{cor2} Let $E(\log \epsilon_1) > 0$.  (i) If $m=0$, then $d_2= \infty$, and (ii) if $M=\infty$, then $d_1 =0$.  In both cases, $0<\rho(x)<1$ for $\forall x>c$. \end{cor}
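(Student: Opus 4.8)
The plan is to read off the two boundary identities from the structure already established and then combine them for the statement on $\rho$; everything rests on Propositions~\ref{prop1} and \ref{prop2}. For part (i), I would argue directly that $m=0$ makes $Z$ essentially unbounded above. Since every summand in $Z=\sum_{n\ge1}(\epsilon_1\epsilon_2\cdots\epsilon_n)^{-1}$ is nonnegative, $Z\ge 1/\epsilon_1$, so for each $t>0$ one has $P(Z\ge t)\ge P(\epsilon_1\le 1/t)$. The hypothesis $m=0$ says precisely that $P(\epsilon_1\le\delta)>0$ for every $\delta>0$; taking $\delta=1/t$ gives $P(Z\ge t)>0$ for all $t$, hence $d_2=\infty$. (Equivalently, this is the $m\le 1$ clause of Proposition~\ref{prop1}(d)(ii) evaluated at $m=0$.)

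For part (ii), I would exploit the self-similarity of Proposition~\ref{prop2}, writing $Z=\epsilon_1^{-1}(1+W)$ with $W$ independent of $\epsilon_1$ and having the distribution of $Z$. By Proposition~\ref{prop1}(a), $W<\infty$ almost surely, so one can fix a finite $K$ with $p:=P(1+W\le K)>0$. Because $M=\infty$, $P(\epsilon_1\ge L)>0$ for every $L>0$; on the independent event $\{\epsilon_1\ge L\}\cap\{1+W\le K\}$ one has $Z\le K/L$, so $P(Z\le K/L)\ge p\,P(\epsilon_1\ge L)>0$, giving $d_1\le K/L$ for all $L$. Letting $L\to\infty$ yields $d_1=0$. (This is just the formula $d_1=1/(M-1)$ of Proposition~\ref{prop1}(d)(i) read at $M=\infty$.)

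Finally, when $m=0$ and $M=\infty$ both hold, parts (i) and (ii) give $d_1=0$ and $d_2=\infty$, so the middle regime $c(d_1+1)<x<c(d_2+1)$ in Proposition~\ref{prop1}(b) is exactly $\{x>c\}$; therefore $0<\rho(x)<1$ for every $x>c$. I would note in passing the scope issue: from $m=0$ alone one obtains only $\rho(x)<1$ for all $x$, and from $M=\infty$ alone only $\rho(x)>0$ for all $x>c$ (recall $E\log\epsilon_1>0$ forces $M>1$, so $d_1$ could still be positive), so the two-sided conclusion genuinely requires both hypotheses.

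I do not expect a serious obstacle: the corollary is essentially bookkeeping over the machinery of Propositions~\ref{prop1} and \ref{prop2}. The only step demanding a little care is part (ii), where one must know that $W$ is honestly finite and independent of $\epsilon_1$ before multiplying the two tail probabilities — both facts are already supplied, by Proposition~\ref{prop1}(a) and Proposition~\ref{prop2} respectively.
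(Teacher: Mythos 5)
Your proof is correct and takes essentially the same route as the paper's own proof, which is just a two-line sketch: $Z>1/\epsilon_1$ for (i), and ``$\epsilon_1^{-1}(1+W)$ approaches zero as $\epsilon_1\to\infty$'' for (ii), followed by an appeal to Proposition~\ref{prop1}. You have merely supplied the quantitative details the paper omits (the tail bounds, the use of independence of $W$ and $\epsilon_1$, and the a.s.\ finiteness of $W$), and your closing observation that the two-sided conclusion $0<\rho(x)<1$ for all $x>c$ genuinely needs both hypotheses is a correct reading of ``in both cases'' that the paper leaves implicit.
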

\begin{proof}
(i) $Z > 1/\epsilon_1$, which exceeds any large value with positive
probability.  (ii) $(1/\epsilon_1)( 1+ W)$ approaches zero as
$\epsilon_1$ goes to infinity.  Now we use Proposition~\ref{prop1}.
\end{proof}

\begin{cor} \label{cor3} Let $E(\log \epsilon_1) > 0$.  Denote the moments of $Z$
  and $1/\epsilon_1$ by $\beta_r= EZ^r$, $\gamma_r=
  E(1/\epsilon_1)^r$, respectively $(r=1,2,...)$.  Then, for all $r$ such that $\gamma_r<1$,
\bea   
 \beta_r&= & \gamma_r \sum_{0≤\le j\le r}\binom{r}{j}\beta_j;\,\, (1-\gamma_r)\beta_r=  \gamma_r \sum_{0\le
  j\le r-1} \binom{r}{j}\beta_j;\nonumber \\
  \beta_r&= & [\gamma_r /(1-\gamma_r)]\sum_{0\le j\le r-1}
  \binom{r}{j}\beta_j.    
\eea
If $\gamma_r\ge 1$ for some $r$, $\beta_r=\infty$.
\end{cor}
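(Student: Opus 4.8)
The plan is to derive the moment recursion directly from the distributional identity in Proposition~\ref{prop2}, namely $Z \stackrel{d}{=} (1/\epsilon_1)(1+W)$ with $W \stackrel{d}{=} Z$ and $W \perp \epsilon_1$. Raising both sides to the $r$-th power and taking expectations, I would write
\bean
\beta_r = EZ^r = E\Big[\epsilon_1^{-r}(1+W)^r\Big] = E(\epsilon_1^{-r})\,E\big[(1+W)^r\big] = \gamma_r\,E\big[(1+W)^r\big],
\eean
where the factorization uses independence. Expanding $(1+W)^r = \sum_{0\le j\le r}\binom{r}{j}W^j$ by the binomial theorem and using $EW^j = EZ^j = \beta_j$ (with $\beta_0 = 1$) gives $\beta_r = \gamma_r\sum_{0\le j\le r}\binom{r}{j}\beta_j$. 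Isolating the $j=r$ term on the left, $(1-\gamma_r)\beta_r = \gamma_r\sum_{0\le j\le r-1}\binom{r}{j}\beta_j$, and dividing by $1-\gamma_r$ (legitimate precisely when $\gamma_r < 1$, and in particular $1-\gamma_r \ne 0$) yields the stated closed form $\beta_r = [\gamma_r/(1-\gamma_r)]\sum_{0\le j\le r-1}\binom{r}{j}\beta_j$.

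The one genuine subtlety — and what I expect to be the main obstacle — is justifying that these manipulations are valid, i.e.\ that all the expectations in question are finite, so that one is not rearranging $\infty = \infty$. I would handle this by an induction on $r$. The base case $r=0$ is trivial ($\beta_0 = 1$), and finiteness of $\beta_1$ when $\gamma_1 < 1$ follows from the identity $\beta_1 = \gamma_1(1+\beta_1)$ read carefully: since all terms are nonnegative, $\beta_1 = \gamma_1 + \gamma_1\beta_1$ forces $\beta_1(1-\gamma_1) = \gamma_1 < \infty$, hence $\beta_1 = \gamma_1/(1-\gamma_1) < \infty$. For the inductive step, assuming $\beta_0,\dots,\beta_{r-1} < \infty$ and $\gamma_r < 1$, the nonnegative-term identity $\beta_r = \gamma_r\sum_{0\le j\le r}\binom{r}{j}\beta_j$ (which holds in $[0,\infty]$ by the monotone convergence / Tonelli argument even before we know $\beta_r$ is finite) rearranges to $\beta_r(1-\gamma_r) = \gamma_r\sum_{0\le j\le r-1}\binom{r}{j}\beta_j$; the right side is finite by the inductive hypothesis, and $1-\gamma_r > 0$, so $\beta_r < \infty$. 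Note that finiteness of $\gamma_r$ itself is automatic here, since $\gamma_r < 1$.

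For the last assertion, that $\gamma_r \ge 1$ for some $r$ forces $\beta_r = \infty$, I would argue as follows. Since $Z = (1/\epsilon_1)(1+W) \ge (1+W)/\epsilon_1 \ge W/\epsilon_1$ with $W \ge 0$, and more usefully $Z^r \ge \epsilon_1^{-r}(1 + W^r)$ pointwise (as $(1+W)^r \ge 1 + W^r$ for $W \ge 0$, $r \ge 1$), taking expectations and using independence gives $\beta_r \ge \gamma_r(1 + \beta_r)$ in $[0,\infty]$. If $\beta_r$ were finite then $\beta_r(1-\gamma_r) \ge \gamma_r$, which is impossible when $\gamma_r \ge 1$ since the left side would be $\le 0$ while $\gamma_r \ge 1 > 0$; hence $\beta_r = \infty$. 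One should also dispatch the degenerate possibility $\gamma_r = \infty$: then already $E(1/\epsilon_1)^r = \infty$, and since $Z > 1/\epsilon_1$ a.s.\ (every partial sum strictly exceeds its first term, as $\epsilon_n > 0$), monotonicity gives $\beta_r = EZ^r \ge E(1/\epsilon_1)^r = \infty$. I would also remark that $E\log\epsilon_1 > 0$ guarantees $Z < \infty$ a.s.\ by Proposition~\ref{prop1}(a), so that $Z^r$ is a well-defined $[0,\infty]$-valued random variable and the recursion is not vacuous; whether its moments are finite is exactly what the dichotomy $\gamma_r \lessgtr 1$ decides.
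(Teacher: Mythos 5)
Your proof is correct and follows essentially the same route as the paper's: the paper's entire argument is the one-line observation that the identities follow from the representation in Proposition~\ref{prop2} by independence of $W$ and $\epsilon_1$ together with the binomial expansion of $(1+W)^r$. Your additional care about finiteness --- the induction on $r$, the Tonelli justification of the rearrangement, and the argument that $\gamma_r\ge 1$ forces $\beta_r=\infty$ --- supplies details the paper leaves implicit.
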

\begin{proof}
This relation is derived directly from the
  representation in Proposition~\ref{prop2} by independence of  $W$ and $\epsilon_1$ and by the binomial
  formula for $(1+W)^r$. 
\end{proof}


\begin{exmp}\label{ex1} \emph{(Lognormal)} Let $\epsilon_1=e^N $ be lognormal,
  where $N$ is a Normal random variable with mean $\mu > 0$, and a
  positive variance $\sigma^2$.  Applying Corollary~\ref{cor2} gives $d_1 =0$ and $d_2=
  \infty$.  Hence, by Proposition~\ref{prop1},  $0 < \rho(x) < 1$ for every $x >c$.
$1/\epsilon_1 = e^{-N}$ is also lognormal, $-N$
  being Normal with mean $-\mu <0$ and variance $\sigma^2$.  Hence, the moments of $1/\epsilon_1$ are given by  
\bea  \label{rth.z1}\gamma_r= E\epsilon_1^{-r}  = e^{-r\mu + \frac{r^2\sigma^2}{2}}\,\, (r=1,2,...).\eea
By Corollary~\ref{cor3} and (\ref{rth.z1}), the moments $\beta_r$ of $Z$ may now
be computed.  One must require $r<2\mu/\sigma^2$.\\

\noindent\textbf{NOTE:} If $N$ is Normal with mean $\mu$ and variance $\sigma^2$, then the rth moment of $e^N$ is $E(e^N)^r  =    E (e^{rN}) = e^{r\mu +(1/2)r^2\sigma^2}$, from the well known formula for the moment generating function of the Normal distribution.
\end{exmp} 

\begin{exmp}\label{ex2}\emph{(Pareto)} Let $k>0$, $\beta>0$.  Suppose
  $\epsilon_1\sim$Pareto($\beta, k)$ with density $f(x)=\beta
k^{\beta}/x^{\beta+1}\mathbb{I}_{\{x\ge k\}}$.  Let $k$ and $\beta$ be
such that $E\log\epsilon_1=\log k+\beta^{-1}>0$.  Then, one obtains
$0<\rho(x)<1$ for every $x>c$ for $e^{-1/\beta} <k\le 1$.  Now the density function of $1/\epsilon_1$ is $f(y)=\beta
k^{\beta}y^{\beta-1}\mathbb{I}_{\{0<y\le \frac{1}{k}\}}$, and its $r$th
moment is provided by 
\bea E\epsilon_1^{-r}=\frac{\beta}{k^{r}(\beta+r)}.  \eea
Again, the moments of $Z$ can be obtained in the same fashion as in Example~\ref{ex1}.
\end{exmp}  

\begin{exmp}\emph{(Gamma)} Assume $\epsilon_1\sim$ Gamma($\alpha,
  \theta$) with density 
  \bean 
  f(x)=\frac{\theta^{\alpha}}{\Gamma(\alpha)} x^{\alpha-1}e^{-\theta
    x}\mathbb{I}_{\{0<x<\infty\}},\,\,\,\alpha,\, \theta>0.
  \eean  
  For $\alpha>\theta$,
  $E\log\epsilon_1=\frac{\theta^{\alpha}}{\Gamma(\alpha)}\int_0^{\infty}(\log
  x)x^{\alpha-1}e^{-\theta x}dx >0$.  By
  Proposition~\ref{prop1}-\emph{(b),(d)}, $d_1=0$, $d_2=\infty$,
  and subsequently $0<\rho(x)<1$ for every $x>c$.  The density
  function of $1/\epsilon_1$ is
  $f(y)=\frac{\theta^{\alpha}}{\Gamma(\alpha)}y^{-\alpha-1}e^{-\theta/y}\mathbb{I}_{\{0<y<\infty\}}$,
which is an Inverse-Gamma($\alpha,\frac{1}{\theta}$).  Since the moment
generating function of the Inverse-Gamma distribution does not exist, one can attain the finite moments by
direct integration:
\bea
E\epsilon_1^{-r}&=&\frac{\theta^r\Gamma(\alpha-r)}{\Gamma(\alpha)},\text{
  or}\\
                          &=&\frac{\theta^r}{(\alpha-1)(\alpha-2)\cdots(\alpha-r)}
                          \text{ for }\alpha\in\mathbb{Z}^+. 
\eea 
In the same manner, the moments of $Z$ can be obtained.
\end{exmp}

\subsection{Approximation to Survival Probability by Multiple Chebyshev
  Inequalities}\label{sec:rho}

With the moments of $Z$ recursively obtained in Corollary~\ref{cor3}, one obtains conservative estimates of
ruin and survival probabilities using Chebyshev Inequality: 
 \bea \label{phi}
1-\rho(x) =P(Z \ge x/c -1) < \frac{\beta_r }{(x/c-1)^r}\,\,(r=1,2,...),\,\,  (x >c).\eea
Notice that the smaller the upper estimate of ruin probability, the
better the approximation of the true ruin probability is.  Equivalently, the larger the lower estimate of survival
probability, the better.  Therefore, the estimate (\ref{phi}) with $r$ over the
one with $r+1$ should be selected, iff
\bea
\frac{\beta_r}{(x/c-1)^r}\le \frac{\beta_{r+1}}{(x/c-1)^{r+1}},\text{
  or } x\le c(1+\frac{\beta_{r+1}}{\beta_r}).
\eea
The upper estimate of ruin probability and the lower of survival probability, consequently, are 
obtained as follows:
\bea
1-\rho(x)<\frac{\beta_r}{(x/c-1)^r},\quad
\rho(x)>1-\frac{\beta_r}{(x/c-1)^r},\,\,(x>c),\label{lower}
\eea
where $r$ is chosen as follows:\\
\bea
\begin{cases}
c(1+\beta_r/\beta_{r-1})<x\le c(1+\beta_{r+1}/\beta_r) & \text{if } r\ge 2, \nonumber\\
c<x\le c(1+\beta_2/\beta_{1}) & \text{if } r=1, \nonumber\\
\end{cases}
\eea
subject to the restriction $\gamma_r<1$.

\subsection{Numerical Examples}\label{sec:num}

A conservative lower estimate of the survival probability using
Chebyshev inequalities with different orders of moments of $Z$
depending on $x$ in (\ref{rho}) is numerically obtained in
the following two examples.  From these examples, one employs the empirical
cumulative distribution function (ECDF), $\hat{F}(x)$, of $Z=\sum_{1\le
  n<\infty}(\epsilon_1\epsilon_2\cdots\epsilon_n)^{-1}$ to examine the performance of the estimate of survival probability using Chebyshev
inequalities with different orders of moments of $Z$ depending on $x$.  To obtain the empirical estimate of the distribution function of $Z$, one generates $n$ random variables of $\epsilon_1^{-1}$, where
$\epsilon_1$ is distributed by lognormal, or Pareto distribution in the examples.  The sum of cumulative
products of $\epsilon_1^{-1}$'s yields a random variable, $Z$, whose distribution is obtained by replicating 3000 $Z$'s ($N=3000$) in this section and the next.  The
empirical cumulative distribution (ECDF) of
$Z$ is then produced by $F_N(x)=\frac{1}{N}\sum_{j=1}^{N}\mathbb{I}\{Z_{n,j}=\sum_{i=1}^{n}\frac{1}{\epsilon_1\epsilon_2\cdots\epsilon_i}<x/c-1\}=\frac{1}{N}\sum_{j=1}^{N}\mathbb{I}\{c(Z_{n,j}+1)<x\}$ ($N=3000$),
i.e. the proportion of time that the agent survives until
time $n$ provided that the initial
stock is $X_0=x$.  Since this empirical cumulative distribution will be close to the true probability ($\rho(x)$) for large enough
simulations, we label $\hat{F}(x)$ as $\rho(x)$ from now on.  Depending on the number of drawings of
$\epsilon_1^{-1}$, one can obtain either a finite-horizon survival
probability ($\rho_n(x)$) or an
infinite-horizon survival probability ($\rho(x)$).\\
\\

\begin{exmp}\label{ex5}(\emph{Lower estimates ($\underline{\rho(x)}$) of $\rho(x)$: Lognormal
    vs. Pareto.}) 
Table~\ref{tab:prob}\protect\footnote{At least 100 of $\epsilon_1$'s were drawn to have the series for $Z$ converge.  This is because the series for $Z$ in the Pareto case converges more slowly than it does in the lognormal case.} shows survival probabilities ($\rho(x)$) and the corresponding
  lower estimates ($\underline{\rho(x)}$) using multiple-Chebyshev inequalities with different orders of moments of $Z$ depending on $x$, where $\epsilon_1$'s are distributed by lognormal, Pareto, respectively .  Since $\rho(x)$ of Pareto distribution reaches 1 immediately for $k>1$ by Proposition~\ref{prop1}, it is not of interest anymore.  Instead, we only consider the case, where $e^{-1/\beta} <k\le
1$ as described in Example~\ref{ex2}, i.e. $0<\rho(x)<1$.  Now, to compare the lower
estimates of survival probabilities for the two distributions, the first and second moments of
$\epsilon_1^{-1}$ are matched by having the parameters of Pareto
distribution ($\beta, k$) free to select.   \\
\\
Table~\ref{tab:Lognormal}, Table~\ref{tab:Pareto.9} present some moments of $\epsilon_1^{-1}$, and those of $Z$ by recursive
computation in Corollary~\ref{cor3}, and the corresponding boundaries
of $x$ obtained by (\ref{lower}) for lognormal and Pareto distributions of
$\epsilon_1$'s, respectively.  Both finite moments of $Z$,
$\beta_r=EZ^r$, can be computed up to the point that $\gamma_r=E\epsilon_1^{-r}$ does
not exceed 1.  Table~\ref{tab:Lognormal}, Table~\ref{tab:Pareto.9}
suggest that $\beta_r=EZ^r$ possibly decreases as far as
$\gamma_r=E\epsilon_1^{-r}<1$.  However, $\beta_{r+1}/\beta_r$
increases, so that the boundary value increases.  For
$e^{-1/\beta}<k<1$, one can choose $k=0.9$ and $\beta=0.1$ such that
$\epsilon_1\sim$Pareto($\beta=0.1$, $k=0.9$).   The parameters of the lognormal
distribution are therefore taken as $\mu=3.17$ and $\sigma^2=1.75$.
The recursive computation to produce the moments for $Z$ yields the result that for the lognormal case,
$\gamma_4=E\epsilon^{-4}>1$ for the first time, which leads to $EZ^4=\infty$
according to Corollary~\ref{cor3}.  Therefore, the lower estimate of
survival probability using multiple-Chebyshev inequalities with different orders of
moments of $Z$ depending on $x$ can be drawn only by the first and
second moments of $Z$.  Here, $\beta_3=EZ^3$ is employed to get the boundary of $x$ for
$\rho(x)>1-\beta_2/(x/c-1)^2$ in (\ref{lower}).  For the Pareto case,
$\gamma_{61}=E\epsilon^{-61}>1$ for the first time, which results in $\beta_{61}=EZ^{61}=\infty$.
Thus, the complete lower estimate of $\rho(x)$ by multiple-Chebyshev inequalities can be achieved by using 59 moments of $Z$.  Similar to the
lognormal case, $\beta_{60}=EZ^{60}$ is used to attain the boundary of
$x$ for $\rho(x)>1-\beta_{59}/(x/c-1)^{59}$.\\

\noindent Table~\ref{tab:prob} exhibits the percentiles of survival
probabilities and the corresponding lower estimates obtained by multiple-Chebyshev
Inequalities with different orders of $Z$ depending on $x$, where $\epsilon_1$'s are respectively distributed by lognormal and
Pareto distributions.  Since the first and second moments ($EZ$,
$EZ^2$) are matched for both distributions, the lower estimates
obtained by those moments are basically the same as far as $x\le 1.9481$, in
Table~\ref{tab:Pareto.9}.  After that, the lower estimate of $\rho(x)$
for Pareto distribution becomes larger
than that of $\rho(x)$ for lognormal
distribution.  That is because the remaining lower estimates of
$\rho(x)$ for the Pareto case are obtained by Chebyshev inequalities
with higher orders of moments of $Z$ than the lower estimates of $\rho(x)$ for the lognormal case.  
Further, the survival probabilities for the Pareto case are lower than
those for the lognormal case overall as $x$ increases in
Table~\ref{tab:prob}.  One can conclude that the lower estimate ($\underline{\rho(x)}$) by multiple-Chebyshev
inequalities with different orders of moments of $Z$ depending on $x$
for the Pareto distribution case more closely approximates the
corresponding survival probability ($\rho(x)$) than that
of $\rho(x)$ for the lognormal distribution case does as $x$ gets larger. \\

\begin{table}[htb]
\begin{minipage}{.45\textwidth}
\centering
\begin{tabular}{rrrr}
  \toprule
r & $E\epsilon_1^{-r}$ & $EZ^r$ & Boundaries \\ 
  \midrule
1 & 0.1010 & 0.1124 & 1.6808 \\ 
  2 & 0.0588 & 0.0765 & 6.0288 \\ 
  3 & 0.1971 & 0.3847 &   Inf \\ 
   \bottomrule
\end{tabular}\captionof{table}{$\ln\epsilon_1\sim$N(3.17,1.75)} 
\label{tab:Lognormal}
\end{minipage}
\begin{minipage}{.40\textwidth}
\centering
\begin{tabular}{rrrr}
  \toprule
r & $E\epsilon_1^{-r}$ & $EZ^r$ & Boundaries \\ 
  \midrule
1 & 0.1010 & 0.1124 & 1.6808 \\ 
  2 & 0.0588 & 0.0765 & 1.9481 \\ 
  3 & 0.0442 & 0.0725 & 2.1704 \\ 
  4 & 0.0372 & 0.0849 & 2.4067 \\ 
   \bottomrule
\end{tabular}\captionof{table}{$\epsilon_1\sim$ \\ Pareto($\beta$=0.1,$k$=0.9)}
\label{tab:Pareto.9}
\end{minipage}
\end{table}

\begin{table}[ht]
\centering
\begin{tabular}{rlllllll}
  \hline
  \hline
$x$ & 1.1 & 1.2 & 1.4 & 1.6 & 1.8 & 2 & 2.2 \\ 
  lognormal & 0.7193 & 0.8633 & 0.9513 & 0.9777 & 0.9863 & 0.9897 & 0.992 \\ 
  (lognormal) & 0 & 0.4382 & 0.7191 & 0.8127 & 0.8805 & 0.9235 & 0.9469 \\ 
  Pareto & 0.7723 & 0.8267 & 0.8913 & 0.9283 & 0.9553 & 0.9827 & 0.9963 \\ 
  (Pareto) & 0 & 0.4382 & 0.7191 & 0.8127 & 0.8805 & 0.9275 & 0.9591 \\ 
   \hline
\end{tabular}
\caption{Survival probabilities, $\rho(x)$, (lognormal, Pareto) vs lower estimates, \underline{$\rho(x)$}, by multiple-Chebyshev's inequalities with different moments of $Z$ depending on $x$ ((lognormal), (Pareto)) for $\ln\epsilon_1\sim$ N(3.17,1.75), $\epsilon_1\sim$ Pareto($\beta$=0.1,$k$=0.9).} 
\label{tab:prob}
\end{table}

%

\end{exmp}

\subsection{Finite-Horizon Survival Probability and Approximation by Multiple Chebyshev Inequalities}\label{sec:finite}

As in section~\ref{sec:Z} and~\ref{sec:rho}, one can compute the
moments of $Z_n:=\sum_{1\le j\le n}(\epsilon_1\cdots\epsilon_j)^{-1}$
and achieve the conservative estimates
of ruin and survival probabilities in finite time.\\

Consider the derivation of $\rho(x)$ in (\ref{def-rho})-(\ref{rho}).
Then it is not hard to set up the probabilities of survival and ruin
until time $n$.  The survival
probability of an economic agent up to the finite time $n$ with an initial stock $x>c$ is 
\bea
\rho_n(x)&:=&P(X_n>c|X_0=x)
\nonumber\\
               &:=& P(Z_n <x/c-1).
\eea
The finite-horizon ruin probability of an economic agent with an
initial stock $x$ is then $1-\rho_n(x)$.\\
\\
Now write $W_{n-1}:=\sum_{2\le j\le
  n}(\epsilon_1\cdots\epsilon_j)^{-1}$.  As the relation in
Proposition~\ref{prop2}, one can rewrite $Z_n$ in terms of $W_{n-1}$
and $\epsilon_1$:
\bea Z_n\stackrel{\mathcal{L}}{=}\epsilon_1^{-1}(1+W_{n-1}), \eea
where $W_{n-1}$ has the same distribution as $Z_{n-1}$, and
$W_{n-1}$ and $\epsilon_1$ are independent.  With this relation, for $E(\log\epsilon_1)>0$, the moments of $Z_n$
for all $n=1,2,...$ are calculated recursively. 
\bea\label{fin-EZ}
\beta_r^{(n)}=\gamma_rE(1+W_{n-1})^r=\gamma_r\sum_{0\le j\le r}\binom{r}{j}\beta_r^{(n-1)},
\eea
where $\beta_r^{(n)}:=EZ_n^r$, $\beta_0^{(n)}:=1$ for all $ n\ge 1$,
and $\beta_r^{(0)}:=0$ for all $r$.  $\gamma_r:=E(1/\epsilon_1)^r$
$(r=1,2,...)$, and $\gamma_0:=1$.\\

\begin{exmp} According to the relation (\ref{fin-EZ}), one can obtain
  explicit forms of the moments of $Z_n$, $\beta_r^{(n)}=EZ_n^r$, for $n=1,2,...$:
  \bean
  \beta_r^{(1)}&=&\gamma_r\,\,(\forall r=0,1,2,...);\\
  \beta_1^{(2)}&=&\gamma_1(1+\beta_1^{(1)})=\gamma_1(1+\gamma_1),\\
  \beta_2^{(2)}&=&\gamma_2(1+2\beta_1^{(1)}+\beta_2^{(1)})=\gamma_2(1+2\gamma_1+\gamma_2),\\
  \beta_3^{(2)}&=&\gamma_3(1+3\beta_1^{(1)}+3\beta_2^{(1)}+\beta_3^{(1)})=\gamma_3(1+3\gamma_1+3\gamma_2+\gamma_3),...;\\
  \beta_1^{(3)}&=&\gamma_1(1+\beta_1^{(2)})=\gamma_1(1+\gamma_1(1+\gamma_1))=\gamma_1+\gamma_1^2+\gamma_1^3),\\
  \beta_2^{(3)}&=&\gamma_2(1+2\beta_1^{(2)}+\beta_2^{(2)})=\gamma_2(1+2\gamma_1+2\gamma_1^2+\gamma_2+2\gamma_1\gamma_2+\gamma_2^2),...;\\
  \vdots
  \eean
  
\end{exmp}

Finally the upper estimate of ruin probability and the lower of survival
probability until time $n$ are 
obtained as follows:
\bea
1-\rho_n(x)<\frac{\beta_r^{(n)}}{(x/c-1)^r},\quad
\rho_n(x)>1-\frac{\beta_r^{(n)}}{(x/c-1)^r},\,\,(x>c),\label{fin-lower}
\eea
where $r$ is chosen as follows:\\
\bea
\begin{cases}
c(1+\beta_r^{(n)}/\beta_{r-1}^{(n)})<x\le c(1+\beta_{r+1}^{(n)}/\beta_r^{(n)}) & \text{if } r\ge 2, \nonumber\\
c<x\le c(1+\beta_2^{(n)}/\beta_{1}^{(n)}) & \text{if } r=1, \nonumber\\
\end{cases}
\eea
\\
\begin{exmp}\label{ex6}(\emph{Lower estimates ($\underline{\rho_n(x)}$) of finite-horizon survival probabilities ($\rho_n(x)$): Lognormal, Pareto, and Gamma.})
  Table~\ref{tab:bdary.l}-Table~\ref{tab:bdary.g} demonstrate boundary values of $x$ based on
  (\ref{fin-lower}), where $\epsilon_1$'s are distributed by lognormal,
  Pareto, and gamma distributions, respectively.  In the same fashion as in Example~\ref{ex5}, the first and second moments of
  $\epsilon_1^{-1}$'s are matched for all the three
  distributions, so that one can compare the survival probabilities ($\rho_n(x)$),
  with the corresponding lower bounds ($\underline{\rho_n(x)}$) by multiple-Chebyshev
  inequalities with different moments of $Z_n$ depending on $x$.  In
  such a way, one gets $\ln\epsilon_1\sim$ N(0.2146,0.0645),
  $\epsilon_1\sim$ Pareto($\beta$=3,$k$=0.9), and $\epsilon_1\sim$
  Gamma($\alpha$=17,$\theta$=13.3333).  The finite moments of $Z_n$
  are calculated recursively from these distributions.  These tables show that the
  boundaries of $x$ increase for $\forall r\ge 1$ as $n$ increases,
  i.e., $Z_n\longrightarrow Z$.\\

\begin{table}[ht]
\centering
\begin{tabular}{llllll}
  \hline
r & $Z_3$ & $Z_5$ & $Z_{10}$ & $Z_{20}$ & $Z$ \\ 
  \hline
c & 1 & 1 & 1 & 1 & 1 \\ 
  1 & 3.3137 & 4.3807 & 5.9908 & 7.0502 & 7.2857 \\ 
  2 & 3.546 & 4.8419 & 7.0433 & 8.8004 & 9.2795 \\ 
  3 & 3.8072 & 5.3915 & 8.4725 & 11.6162 & 12.7826 \\ 
  4 & 4.1018 & 6.0525 & 10.4814 & 16.7021 & 20.5384 \\ 
  5 & 4.4353 & 6.8551 & 13.4176 & 27.5237 & 52.1729 \\ 
   \hline
\end{tabular}
\caption{Boundaries of $x$ for the lower estimates of $\rho_n(x)$ by multiple-Chebyshev inequalities with $r^{th}$ moments of $Z_n$ $(n=3,5,10,20)$ and $Z$, where $\ln\epsilon_1\sim$ N(0.2146,0.0645), and c=1 (a fixed amount of consumption).} 
\label{tab:bdary.l}
\end{table}
\begin{table}[ht]
\centering
\begin{tabular}{llllll}
  \hline
r & $Z_3$ & $Z_5$ & $Z_{10}$ & $Z_{20}$ & $Z$ \\ 
  \hline
c & 1 & 1 & 1 & 1 & 1 \\ 
  1 & 3.3137 & 4.3807 & 5.9908 & 7.0502 & 7.2857 \\ 
  2 & 3.4698 & 4.6962 & 6.7183 & 8.2603 & 8.6618 \\ 
  3 & 3.5932 & 4.9592 & 7.3887 & 9.5165 & 10.1737 \\ 
  4 & 3.6938 & 5.1826 & 8.0083 & 10.8238 & 11.8661 \\ 
  5 & 3.7777 & 5.3752 & 8.5816 & 12.1805 & 13.791 \\ 
   \hline
\end{tabular}
\caption{Boundaries of $x$ for the lower estimates of $\rho_n(x)$ by multiple-Chebyshev's inequalities with $r^{th}$ moments of $Z_n$ ($n=3,5,10,20$) and $Z$, where $\epsilon_1\sim$ Pareto($\beta$=3,$k$=0.9), and c=1 (a fixed amount of consumption).} 
\label{tab:bdary.p}
\end{table}
\begin{table}[ht]
\centering
\begin{tabular}{llllll}
  \hline
r & $Z_3$ & $Z_5$ & $Z_{10}$ & $Z_{20}$ & $Z$ \\ 
  \hline
c & 1 & 1 & 1 & 1 & 1 \\ 
  1 & 3.3137 & 4.3807 & 5.9908 & 7.0502 & 7.2857 \\ 
  2 & 3.5606 & 4.87 & 7.1073 & 8.9091 & 9.405 \\ 
  3 & 3.8589 & 5.4978 & 8.7526 & 12.201 & 13.546 \\ 
  4 & 4.2255 & 6.3255 & 11.3481 & 19.2233 & 25.1935 \\ 
  5 & 4.6846 & 7.4534 & 15.8266 & 39.6022 & 256.6073 \\ 
   \hline
\end{tabular}
\caption{Boundaries of $x$ for the lower estimates of $\rho_n(x)$ by multiple-Chebyshev inequalities with $r^{th}$ moments of $Z_n$ ($n=3,5,10,20$) and $Z$, where $\epsilon_1\sim$ $\Gamma$($\alpha$=17,$\theta$=13.3333), and c=1 (a fixed amount of consumption) (Computations in Tables \ref{tab:bdary.l}-\ref{tab:bdary.g} corresponding to any c>0 can be handled by interpreting x as x/c.).} 
\label{tab:bdary.g}
\end{table}
\noindent Table~\ref{tab:lgtab}-Table~\ref{tab:igtab} demonstrate
survival probabilities ($\rho_n(x)$) in finite time and the
corresponding lower estimates ($\underline{\rho_n(x)}$) using multiple-Chebyshev inequalities with different moments of
  $Z_n$ depending on $x$.  These Tables demonstrate the way the survival
  probabilities get lower as time $n$ increases when $x$ fixed.  They also display how much higher the survival
  probabilities get for
  each time $n$ as an initial stock $x$ increases.  In
  addition to these observations, the estimates, $\underline{\rho_n(x)}$, by multiple-Chebyshev inequalities with different orders of moments of $Z_n$
  depending on $x$ become closer to $\rho_n(x)$ as $x$ becomes larger for each $n$.  In
  particular, one can observe that the survival probabilities $\rho_n(x)$ for
  Pareto distribution are more closely approximated by the estimates $\underline{\rho_n(x)}$ than those for the other two distributions as $x$
  gets larger for each $n$.  Moreover, the lower estimates
  $\underline{\rho_n(x)}$ for lognormal distribution approximate the
  survival probabilities $\rho_n(x)$ slightly better than those for
  gamma distribution do.  It turns out that the lower estimates of
  $\rho_n(x)$ by multiple-Chebyshev
  inequalities with different orders of moments of $Z_n$ depending on
  $x$ for Pareto distribution, lognormal, and gamma in
  this order show good performance of approximating the true probabilities.

\begin{table}[ht]
\centering
\begin{tabular}{rrrrrrrrr}
  \hline
$x$ & $\rho_{l,3}(x)$ & $\rho_{3}(x)$ & $\rho_{l,5}(x)$ & $\rho_{5}(x)$ & $\rho_{l,10}(x)$ & $\rho_{10}(x)$ & $\rho_{l,20}(x)$ & $\rho_{20}(x)$ \\ 
  \hline
3.5 & 0.2202 & 0.7530 & 0.0000 & 0.3633 & 0.0000 & 0.1290 & 0.0000 & 0.0907 \\ 
  7.5 & 0.9907 & 0.9997 & 0.9257 & 0.9927 & 0.5396 & 0.8890 & 0.3027 & 0.8193 \\ 
  9.5 &  &  & 0.9806 & 0.9997 & 0.8190 & 0.9700 & 0.6258 & 0.9280 \\ 
  12.5 &  &  & 0.9957 & 1.0000 & 0.9555 & 0.9963 & 0.8605 & 0.9807 \\ 
   \hline
\end{tabular}
\caption{The lower estimates of $\rho_n(x)$ using multiple-Chebyshev inequalities ($\rho_{l,n}(x)$) vs the survival probabilities ($\rho_n(x)$) for $\ln\epsilon_1\sim$ N(0.2146,0.0645)} 
\label{tab:lgtab}
\end{table}

\begin{table}[ht]
\centering
\begin{tabular}{rrrrrrrrr}
  \hline
$x$ & $\rho_{l,3}(x)$ & $\rho_{3}(x)$ & $\rho_{l,5}(x)$ & $\rho_{5}(x)$ & $\rho_{l,10}(x)$ & $\rho_{10}(x)$ & $\rho_{l,20}(x)$ & $\rho_{20}(x)$ \\ 
  \hline
3.5 & 0.2296 & 0.7070 & 0.0000 & 0.3557 & 0.0000 & 0.1713 & 0.0000 & 0.1393 \\ 
  7.5 & 1.0000 & 1.0000 & 0.9976 & 1.0000 & 0.5718 & 0.8783 & 0.3027 & 0.7930 \\ 
  9.5 &  &  &  &  & 0.8972 & 0.9770 & 0.6517 & 0.9323 \\ 
  12.5 &  &  &  &  & 0.9961 & 0.9990 & 0.9135 & 0.9853 \\ 
   \hline
\end{tabular}
\caption{The lower estimates of $\rho_n(x)$ using multiple-Chebyshev inequalities ($\rho_{l,n}(x)$) vs the survival probabilities ($\rho_n(x)$) for $\epsilon_1\sim$ Pareto($\beta$=3,$k$=0.9)} 
\label{tab:prtab}
\end{table}
\begin{table}[ht]
\centering
\begin{tabular}{rrrrrrrrr}
  \hline
$x$ & $\rho_{l,3}(x)$ & $\rho_{3}(x)$ & $\rho_{l,5}(x)$ & $\rho_{5}(x)$ & $\rho_{l,10}(x)$ & $\rho_{10}(x)$ & $\rho_{l,20}(x)$ & $\rho_{20}(x)$ \\ 
  \hline
3.5 & 0.2202 & 0.7860 & 0.0000 & 0.3653 & 0.0000 & 0.1293 & 0.0000 & 0.0780 \\ 
  7.5 & 0.9901 & 0.9997 & 0.9192 & 0.9887 & 0.5347 & 0.9133 & 0.3027 & 0.8267 \\ 
  9.5 &  &  & 0.9848 & 0.9983 & 0.8102 & 0.9767 & 0.6206 & 0.9293 \\ 
  12.5 &  &  & 0.9983 & 1.0000 & 0.9490 & 0.9957 & 0.8508 & 0.9777 \\ 
   \hline
\end{tabular}
\caption{The lower estimates of $\rho_n(x)$ using multiple-Chebyshev inequalities ($\rho_{l,n}(x)$) vs the survival probabilities ($\rho_n(x)$) for $\epsilon_1\sim$ $\Gamma$($\alpha$=17,$\theta$=13.3333)} 
\label{tab:igtab}
\end{table}
\end{exmp}

\newpage

\end{document}